\theoremstyle{definition}
\newtheorem{problem}{Problem}
\newtheorem{example}{Example}
\newtheorem{assumption}{Assumption}
\newtheorem{definition}{Definition}
\newtheorem{remark}{Remark}
\newtheorem{theorem}{Theorem}
\newtheorem{corollary}{Corollary}
\newtheorem{case study}{Case Study}
\title{\LARGE \bf
Learning Robust and Correct Controllers from \\ Signal Temporal Logic Specifications Using BarrierNet}
\author{Wenliang Liu$^{1}$, Wei Xiao$^{2}$, and Calin Belta$^{1}$
\thanks{*This work was partially supported by the National Science Foundation under grant IIS-2024606 at Boston University.}
\thanks{$^{1}$Wenliang Liu and Calin Belta are with Department of Mechanical Engineering,
        Boston University, MA, USA
        {\tt\small wliu97@bu.edu, cbelta@bu.edu}}%
\thanks{$^{2}$Wei Xiao is with the Computer Science and Artificial Intelligence Lab, Massachusetts Institute of Technology, MA, USA.
        {\tt\small weixy@mit.edu}}%
}
\begin{document}

\maketitle
\thispagestyle{empty}
\pagestyle{empty}

\begin{abstract}
In this paper, we consider the problem of learning a neural network controller for a system required to satisfy a Signal Temporal Logic (STL) specification. We exploit STL quantitative semantics to define a notion of robust satisfaction. Guaranteeing the correctness of a neural network controller, i.e., ensuring the satisfaction of the specification by the controlled system, is a difficult problem that received a lot of attention recently. We provide a general procedure to construct a set of trainable High Order Control Barrier Functions (HOCBFs) enforcing the satisfaction of formulas in a fragment of STL. We use the BarrierNet, implemented by a differentiable Quadratic Program (dQP) with HOCBF constraints, as the last layer of the neural network controller, to guarantee the satisfaction of the STL formulas. We train the HOCBFs together with other neural network parameters to further improve the robustness of the controller. Simulation results demonstrate that our approach ensures satisfaction and outperforms existing algorithms. 
\end{abstract}

\section{INTRODUCTION}

Autonomous and robotic systems are usually tasked to satisfy requirements that go beyond stability and set invariance. For example, in a surveillance application, an autonomous aircraft may be required to gather data from a region of interest every 3 hours, charge at its docking station for at least 20 minutes every 2 hours, and avoid a no-flight zone for all times.   
Temporal logics, such as Linear Temporal Logic (LTL) \cite{baier2008principles} and Signal Temporal Logic (STL) \cite{maler2004monitoring}, have been widely used as specification languages due to their rich expressivity. 

In this paper, we consider the problem of controlling a system to satisfy a specification given as a STL formula. This logic is interpreted over real-valued signals and it has both qualitative semantics, in which a signal either satisfies or violates a formula, and quantitative semantics (also known as robustness) \cite{donze2010robust}, in which a signal is associated a real value that measures how strongly the specification is satisfied. It was shown that controlling a system required to satisfy an STL specification can be formulated as an optimization problem with robustness as objective or as a constraint, which can be solved using Mixed Integer Programming (MIP) \cite{raman2014model,sadraddini2015robust} or gradient-based optimization \cite{pant2017smooth,mehdipour2019arithmetic,gilpin2020smooth}. Such methods, however, are computationally expensive and difficult to use for online control.

Reinforcement Learning (RL)-based techniques can perform most of the computation offline, hence enabling real-time control. 
Model-based RL using neural network was applied to control synthesis problems under STL tasks in \cite{yaghoubi2019worst,liu2021safe,leung2022semi}, where the robustness was used as an objective (reward) function to learn a robust controller.  However, these works cannot guarantee the correctness of the learned policy, i.e., satisfaction of the specification by the system under the derived policy. Violation can have two main causes. First, while training a neural network, the system can get stuck at a local optimum, which can be far from the global optimum. This can result in a policy leading to unwanted behavior both during training and testing, and it is likely to happen when the STL specification and the system dynamics are complex. Second, even if the neural network converges to a policy that satisfies the STL specification during training, when given unseen initial conditions or environments in testing, the policy can still fail. The works in  \cite{yaghoubi2019worst} and \cite{leung2022semi} use falsification methods, while \cite{liu2021safe} uses Control Barrier Functions (CBF) to mitigate the second problem, but none of them can guarantee satisfaction. Q-learning is also considered for STL control synthesis in \cite{aksaray2016q,venkataraman2020tractable}. This provides no guarantee of satisfaction either. The authors of \cite{kalagarla2021model} use constrained Markov Decision Process (cMDP) to provide a lower bound on the probability of satisfying an STL specification. 

In this paper, we use model-based RL and assume that the model (system dynamics) is known. We propose an algorithm to learn a control policy that is guaranteed to satisfy the given STL specification during both training and deployment by using CBFs.
These types of functions have been extensively used in the controls community to guarantee safety specified as set invariance \cite{ames2016control,ames2019control}. CBFs have also been employed to enforce the satisfaction of STL specifications. The authors of \cite{lindemann2018control} used time-varying CBFs to satisfy tasks given in a fragment of STL. The controller was obtained via a quadratic program (QP), which can be solved efficiently. In \cite{xiao2021high}, high order control Lyapunov-barrier functions were defined and used to satisfy STL tasks for systems with arbitrary relative degrees. The methods in \cite{lindemann2018control,xiao2021high} require manual design of the CBFs corresponding to the STL and the parameters in the constraints. A bad design may result in increased conservativeness or even infeasibility. Recently, we proposed BarrierNet \cite{xiao2021barriernet}, implemented as a differentiable QP with CBF constraints, as the last layer of a neural network controller to guarantee safety. In this method, the parameters in the CBF constraints can be obtained through training, which results in significant decrease in conservativeness.  

In this paper, we combine BarrierNet \cite{xiao2021barriernet} with time-varying CBFs for STL tasks \cite{lindemann2018control} to train a neural network controller that guarantees the satisfaction of formulas in a fragment of STL that contains no nested temporal operators and the ``\emph{until}" operator.
We extend \cite{lindemann2018control} to High Order Control Barrier Functions (HOCBFs) \cite{xiao2019control} and provide a general, algorithmic procedure to generate these HOCBFs given an STL formula. Further, unlike the fixed CBFs in \cite{lindemann2018control}, our HOCBFs contain parameters that can be trained together with the neural network controller using BarrierNet. As a result, our approach avoids the complicated manual design in \cite{lindemann2018control} and reduces the conservativeness after training.  Our results show that the learned policy achieves a higher robustness than directly applying CBFs as in \cite{lindemann2018control}. 
Unlike \cite{xiao2021barriernet} where the policy is trained on a dataset using supervised learning, we apply model-based RL to train the policy as in \cite{liu2021safe}. Therefore, no dataset is needed during training. The trained controller can be implemented in real-time and generalized to random initial conditions while retaining correctness. 


\section{Preliminaries}
\label{sec:prelim}

We use non-bold letters $x$, bold letters $\mathbf x$, and calligraphic letters $\mathcal X$ to denote scalars, vectors, and sets respectively. Consider a nonlinear control-affine system:
\begin{equation}
    \label{eq:system}
    \mathbf{\dot x} = f(\mathbf x) + g(\mathbf x)\mathbf u,
\end{equation}
where $\mathbf x\in\mathbb R^n$ is the system state, $\mathbf u \in\mathcal U\in \mathbb R^q$ is the control, $f:\mathbb R^n\rightarrow \mathbb R^n$ and $g:\mathbb R^n\rightarrow \mathbb R^{n\times q}$ are locally Lipschitz continuous functions. We assume $\mathcal U$ is a box constraint, i.e., $\mathbf u_{min}\leq\mathbf u\leq\mathbf u_{max}$, where the inequality is interpreted element-wise. Without loss of generality, we assume the initial time is $0$. The initial condition $\mathbf x(0)=\mathbf x_0$ is randomly sampled in a set $\mathcal X_0\in \mathbb R^n$ with probability density function $P:\mathcal X_0\rightarrow \mathbb R$. We consider solutions to \eqref{eq:system} over a compact time interval $[0,T]$. Given an initial condition $\mathbf x_0\in\mathcal X_0$ and a control signal $\mathbf u:[0,T]\rightarrow \mathcal U$, a signal $\mathbf x:[0,T] \rightarrow \mathbb R^n$ is a solution of \eqref{eq:system} if $\mathbf x(t)$ is absolutely continuous and satisfies \eqref{eq:system} for all $t\in [0,T]$. A partial solution on $[0,t]$ is denoted as $\mathbf x_{0:t}: [0,t]\rightarrow \mathbb R^n$. We define a state-feedback neural network controller with memory as 
\begin{equation}
\label{eq:nn}
    \mathbf u(t) = \pi(\mathbf x_{0:t},\bm\theta),
\end{equation}
where $\bm\theta$ is a set of neural network parameters. Memory can be enabled by using Recurrent Neural Network (RNN) \cite{goodfellow2016deep}. 

\subsection{Signal Temporal Logic (STL)}

Signal Temporal Logic \cite{maler2004monitoring} is interpreted over real-valued signals $\mathbf x:\mathbb R_{\geq0}\rightarrow \mathbb R^n$, e.g., solutions of \eqref{eq:system}. In this paper, we consider a fragment of STL with the following syntax:
\begin{subequations}
\label{eq:stl}
\begin{align}
\label{eq:stl1}
    \phi &\coloneqq \top\ |\ \mu\ |\ \neg \mu\ |\ \phi_1\land\phi_2 \\
\label{eq:stl2}
    \varphi &\coloneqq F_{[t_a,t_b]}\phi\ |\ G_{[t_a,t_b]}\phi\ |\  \varphi_1\land\varphi_2,
\end{align}
\end{subequations}
where $\phi$ and $\varphi$ are STL formulae, $\phi_1$ and $\phi_2$ are formulae of class $\phi$ while $\varphi_1$, $\varphi_2$ are formulae of class $\varphi$, $\top$ is the logical \emph{true}, $\mu$ is a predicate in the form of $h(\mathbf x)\geq 0$ with $h:\mathbb R^n\rightarrow \mathbb R$, $\neg$ and $\land$ are Boolean negation and conjunction respectively, $F$ and $G$ are temporal \emph{eventually} and \emph{always} respectively, $[t_a,t_b]$ is a time interval with $t_a<t_b$. 

We use $(\mathbf x, t)\models \varphi$ to denote that signal $\mathbf x$ satisfies $\varphi$ at time $t$. A formal definition of qualitative semantics of STL can be found in \cite{maler2004monitoring}. Informally, $F_{[t_a,t_b]}\phi$ is satisfied if ``$\phi$ becomes \textit{True} at some time in $[t_a,t_b]$" while $G_{[t_a,t_b]}\phi$ is satisfied if ``$\phi$ is \textit{True} at all time in $[t_a,t_b]$". Other Boolean operators are interpreted in the usual way. Compared with the full STL \cite{maler2004monitoring}, the STL fragment \eqref{eq:stl} cannot contain the temporal \emph{until} or nested temporal operators like ``eventually always". However, it is still capable of expressing a wide range of useful temporal properties in practice, e.g., safety and reachability constraints with concrete time requirements.

STL is also equipped with quantitative semantics, also called robustness, which is a real value that measures how much a signal satisfies $\varphi$. Multiple STL robustness measures have been proposed \cite{donze2010robust,mehdipour2019arithmetic,gilpin2020smooth,varnai2020robustness}. In this paper, we use the smooth robustness defined in \cite{liu2022robust}, which is differentiable almost everywhere, and easy to embedded in learning-based algorithms. The robustness is sound in the sense that the robustness value is positive if and only if the STL formula is satisfied. We denote the robustness of $\varphi$ at time $t$ with respect to a signal $\mathbf x$ as $\rho(\varphi,\mathbf x, t)$. Further, we define the time horizon of an STL formula $\varphi$ as $hrz(\varphi)$, which is the closest time point in the future that is required to determine the satisfaction and robustness of $\varphi$. In this paper, we only consider the solution of system \eqref{eq:system} within the time horizon of the given STL formula, i.e., $T = hrz(\varphi)$. 

\subsection{Time-Varying High Order Control Barrier Function}

In this subsection we introduce time-varying High Order Control Barrier Function (HOCBF) \cite{xiao2019control}. We start with giving the definition of class $\mathcal K$ function:
\begin{definition}
    (Class $\mathcal K$ function) A continuous function $\alpha:[0,a)\rightarrow[0,\infty)$ belongs to class $\mathcal K$ if it is strictly increasing and $\alpha(0)=0$.
\end{definition}

Informally, the relative degree of a (sufficiently many times) differentiable time-varying function $b:\mathbb R^n\times[0,T]\rightarrow\mathbb R$ defined over the state of system \eqref{eq:system} is the number of times it needs to be differentiated along its dynamics until all elements in the control $\mathbf u$ show up. 
Consider a constraint $b(\mathbf x,t)\geq 0$ where $b:\mathbb R^n\times[0,T]\rightarrow\mathbb R$ is a differentiable function with relative degree $m$. Let $\psi_0(\mathbf x,t) \coloneqq b(\mathbf x,t)$. We define a sequence of functions $\psi_i:\mathbb R^n\times[0,T]\rightarrow \mathbb R$, $i=1,\ldots,m$ as follows:
\begin{equation}
\label{eq:psi}
    \psi_i(\mathbf x,t) \coloneqq \dot \psi_{i-1}(\mathbf x,t) + \alpha_i\big(\psi_{i-1}(\mathbf x,t)\big),
\end{equation}
where $\alpha_i$, $i=1,\ldots,m$ is a $(m-i)^{th}$ order differentiable class $\mathcal K$ function. Let $\mathcal C_i(t)$ be the super-level set of $\psi_i(\mathbf x,t)$:
\begin{equation}
\label{eq:c}
    \mathcal C_i(t) = \{\mathbf x\in \mathbb R^n|\psi_i(\mathbf x,t)\geq 0\}.
\end{equation}
\begin{definition}
    (HOCBF \cite{xiao2019control}) Let $\psi_1(\mathbf x,t),\ldots,\psi_m(\mathbf x,t)$ be defined by \eqref{eq:psi} and $\mathcal C_1(t),\ldots,C_m(t)$ be defined by \eqref{eq:c}. A differentiable function $b(\mathbf x,t)$ is a High Order Control Barrier Function (HOCBF) with relative degree $m$ with respect to system \eqref{eq:system} if there exist differentiable class $\mathcal K$ functions $\alpha_i$, $i=1,\ldots,m$, such that
    \begin{equation}
    \label{eq:hocbf}
    \begin{aligned}
        \sup_{\mathbf u\in\mathcal U}\big[&L_f^m b(\mathbf x,t) + L_gL_f^{m-1}b(\mathbf x,t)\mathbf u + \frac{\partial^mb(\mathbf x,t)}{\partial t^m}\\
        &+O(b(\mathbf x,t)) + \alpha_m(\psi_{m-1}(\mathbf x,t))\big] \geq 0,
    \end{aligned}
    \end{equation}
    for all $(\mathbf x,t)\in \mathcal C_1(t)\cap\mathcal C_2(t)\cap\ldots\cap\mathcal C_m(t)\times[0,T]$. In \eqref{eq:hocbf}, $L_f^m$ ($L_g$) denotes Lie derivatives along $f$ ($g$) $m$ (one) times, and $O(b(\mathbf x,t))$ denotes the remaining Lie derivatives along $f$ and partial derivatives with respect to $t$ with degree less than $m$. 
\end{definition}
Note that HOCBF is a general form of CBF \cite{ames2019control}. Setting $m=1$ reduces the HOCBF to the common form of CBF.
\begin{definition}
    (Forward invariant) A set $\mathcal C(t)\subset\mathbb R^n$ that depends on time is forward invariant for system \eqref{eq:system} given a control law $\mathbf u$ if for any $\mathbf x(0)\in\mathcal C(0)$, the solution of system \eqref{eq:system} satisfies $\mathbf x(t)\in\mathcal C(t)$, $\forall t\in[0,T]$. 
\end{definition}

\begin{theorem}
\label{thm:cbf}
\cite{xiao2019control} Given an HOCBF $b(\mathbf x,t)$ with a sequence of sets $\mathcal C_1(t),\ldots,C_m(t)$ as defined in \eqref{eq:c}, if $\mathbf x(0)\in \mathcal C_1(0)\cap\mathcal C_2(0)\cap\ldots\cap\mathcal C_m(0)$, then any Lipschitz continuous controller $\mathbf u(t)$ that satisfies \eqref{eq:hocbf} $\forall t\in[0,T]$ renders $\mathcal C_1(t)\cap\mathcal C_2(t)\cap\ldots\cap\mathcal C_m(t)$ forward invariant for system \eqref{eq:system}.
\end{theorem}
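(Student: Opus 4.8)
\emph{Proof sketch.} The plan is a telescoping application of the comparison lemma down the chain $\psi_m,\psi_{m-1},\dots,\psi_1$. First I would pin down the closed-loop trajectory: since $\mathbf u(t)$ is Lipschitz in $t$ and $f,g$ are locally Lipschitz, the closed loop $\dot{\mathbf x}=f(\mathbf x)+g(\mathbf x)\mathbf u(t)$ has a unique solution $\mathbf x(t)$ from $\mathbf x(0)$ on $[0,T]$, and each map $t\mapsto\psi_i(\mathbf x(t),t)$ is $C^1$ (this is exactly what the differentiability requirements on the $\alpha_i$ in \eqref{eq:psi} guarantee). The single algebraic identity needed is that, evaluated along this trajectory, the bracketed quantity in \eqref{eq:hocbf} equals $\tfrac{d}{dt}\psi_{m-1}(\mathbf x(t),t)+\alpha_m\big(\psi_{m-1}(\mathbf x(t),t)\big)=\psi_m(\mathbf x(t),t)$, because $L_f^m b+L_gL_f^{m-1}b\,\mathbf u+\partial^m b/\partial t^m+O(b)$ is precisely $\dot\psi_{m-1}$ expanded via \eqref{eq:system} and the chain rule. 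Hence the hypothesis that $\mathbf u(t)$ satisfies \eqref{eq:hocbf} for every $t$ gives
\[
\psi_m(\mathbf x(t),t)\ge 0\quad\Longleftrightarrow\quad \dot\psi_{m-1}(\mathbf x(t),t)\ge-\alpha_m\big(\psi_{m-1}(\mathbf x(t),t)\big),\qquad\forall\,t\in[0,T],
\]
so in particular $\mathbf x(t)\in\mathcal C_m(t)$ for all $t$.

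Next I would show $\mathbf x(t)\in\mathcal C_k(t)$ for all $t$ and $k=m-1,\dots,1$ by downward induction on $k$ (nothing more to prove when $m=1$). For the base case $k=m-1$, set $y(t)\coloneqq\psi_{m-1}(\mathbf x(t),t)$: it is $C^1$, $y(0)\ge0$ because $\mathbf x(0)\in\mathcal C_{m-1}(0)$, and $\dot y\ge-\alpha_m(y)$ by the display above. Let $z$ solve $\dot z=-\alpha_m(z)$ with $z(0)=y(0)\ge0$; since $\alpha_m$ is class $\mathcal K$ the right-hand side vanishes at $0$, so $z(t)\ge0$ for all $t$, and the comparison lemma yields $y(t)\ge z(t)\ge0$, i.e.\ $\psi_{m-1}(\mathbf x(t),t)\ge0$. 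For the step from $k+1$ to $k$, assuming $\psi_{k+1}(\mathbf x(t),t)\ge0$ for all $t$, relation \eqref{eq:psi} reads $\psi_{k+1}=\dot\psi_k+\alpha_{k+1}(\psi_k)$, whence $\dot\psi_k(\mathbf x(t),t)\ge-\alpha_{k+1}\big(\psi_k(\mathbf x(t),t)\big)$; as $\psi_k(\mathbf x(0),0)\ge0$ (since $\mathbf x(0)\in\mathcal C_k(0)$), the same comparison step gives $\psi_k(\mathbf x(t),t)\ge0$. Together with the base observation this shows $\mathbf x(t)\in\mathcal C_1(t)\cap\cdots\cap\mathcal C_m(t)$ for all $t\in[0,T]$, which is the definition of forward invariance.

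The hard part is not any computation but a mild circularity: \eqref{eq:hocbf} is asserted only on $\mathcal C_1(t)\cap\cdots\cap\mathcal C_m(t)$, yet the comparison argument applies it along a trajectory not \emph{a priori} known to stay in that intersection. I would dispose of this by reading the theorem's hypothesis as ``the realized control makes the bracket in \eqref{eq:hocbf} nonnegative for every $t\in[0,T]$'', so that $\psi_m(\mathbf x(t),t)\ge0$ holds outright; alternatively, to use only feasibility of the dQP, I would argue with a first-exit time $\tau\coloneqq\inf\{t:\mathbf x(t)\notin\bigcap_{i=1}^m\mathcal C_i(t)\}$, run the cascade on $[0,\tau)$, and use continuity of the $\psi_i$ to obtain $\psi_i(\mathbf x(\tau),\tau)\ge0$, contradicting minimality of $\tau$ unless $\tau=T$. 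A minor point to handle carefully is that class $\mathcal K$ functions need not be locally Lipschitz, so $\dot z=-\alpha_m(z)$ may fail to have a unique solution; the argument only invokes sign preservation of its solutions, which holds regardless.
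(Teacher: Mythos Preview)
The paper does not supply its own proof of this theorem; it is stated as a cited result from \cite{xiao2019control} in the preliminaries, so there is nothing here to compare against directly. Your argument---identifying the bracketed expression in \eqref{eq:hocbf} with $\psi_m$ along the closed-loop trajectory and then cascading the comparison lemma downward through $\psi_{m-1},\dots,\psi_1$---is exactly the proof given in the cited reference, and it is correct.

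Your two caveats are well placed. The domain-of-validity circularity is real and is typically resolved in the literature just as you propose, either by reading the hypothesis as holding along the realized trajectory or by a first-exit-time contradiction. On the regularity of the $\alpha_i$: note that the paper (Definition~1) uses class $\mathcal K$ functions whose domain is $[0,a)$, so strictly speaking $\alpha_{i}(\psi_{i-1})$ is not even defined once $\psi_{i-1}<0$; the cited source works with \emph{extended} class $\mathcal K$ functions defined on an interval containing the origin, which is what makes the sign-preservation step go through. You may want to flag that as a tacit assumption rather than a gap in your own reasoning.
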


\section{Problem Formulation and Approach}
\label{sec:prob-form}

Let $J(\mathbf u)$ be a cost function over control signals $\mathbf u:[0,T]\rightarrow\mathcal U$. The problem we consider in this paper is:

\begin{problem}
\label{pb:1}
Given a system with known dynamics \eqref{eq:system}, an STL specification $\varphi$ as in \eqref{eq:stl}, and a initial state $\mathbf x_0$ sampled from the distribution $P:\mathcal X_0\rightarrow \mathbb R$, find the optimal control $\mathbf u^*(t)$ 
that maximizes the STL robustness and minimize the cost $J(\mathbf u)$ while guaranteeing the satisfaction of $\varphi$:
\begin{equation}
    \label{eq:goal}
    \begin{aligned}
       \mathbf u^*(t) = &\arg\max_{\mathbf u(t)} \rho(\varphi,\mathbf x, 0) - J(\mathbf u) \\
       \text{s.t.}\quad & \dot{\mathbf x} = f(\mathbf x) + g(\mathbf x)\mathbf u(t), \\
       & \mathbf u_{min} \leq \mathbf u(t) \leq \mathbf u_{max},\\
       & (\mathbf x,0)\models \varphi.
    \end{aligned}
\end{equation}
\end{problem}

To be robust against disturbances, a feedback controller is desired. One can obtain such a feedback controller by solving \eqref{eq:goal} at each discrete time step in a model predictive control manner as in \cite{raman2014model,sadraddini2015robust}. However, doing so can be time-consuming and prevent real-time control. Training a neural network controller that maximizes the expected objective in \eqref{eq:goal} over initial state distribution $P$ can move the online computation to offline. After training, the controller can be computed in real-time and can be generalized to random initial conditions under the distribution $P$ \cite{liu2021safe}. Moreover, in general, an STL specification is history-dependent \cite{liu2021recurrent}, i.e., to satisfy it, the desired control $\mathbf u(t)$ should depend on not only the current state $\mathbf x(t)$ but also history states $\mathbf x_{0:t}$. Hence, a controller with memory is needed. 

In this paper, we train a neural network controller with memory \eqref{eq:nn} to solve Problem \ref{pb:1}. 
We first construct a set of trainable time-varying HOCBFs from the STL formula $\varphi$. Then we embedded these HOCBFs into the neural network controller using a modified version of the BarrierNet from \cite{xiao2021barriernet}
to guarantee the satisfaction of $\varphi$. We train the neural network controller together with the HOCBFs to further increase the STL robustness. 

\section{Solution}
\label{sec:solution}

In this section, we present our solution for Problem \ref{pb:1}. We first introduce the trainable HOCBF and a modified version of BarrierNet from \cite{xiao2021barriernet} in Section \ref{subsec:barriernet}. Then we design a general procedure to construct a set of time-varying HOCBFs that can be used to guarantee the satisfaction of a given STL specification in Section \ref{subsec:hocbf}. Then we describe how these time-varying HOCBFs are trained together with the neural network controller using BarrierNet in Section \ref{subsec:learning} to further improve robustness.

\subsection{Trainable HOCBF and BarrierNet}
\label{subsec:barriernet}
Suppose that we have a set of time-varying HOCBFs $b_j(\mathbf x,t,\bm\theta_b,\mathbf x_0)$ that depend on the initial condition $\mathbf x_0$ and contain trainable parameters $\bm\theta_b$, $j=1,\ldots,M$. The reason they depend on $\mathbf x_0$ will be clear in Section \ref{subsec:learning}. To avoid over-conservativeness, we make the class $\mathcal K$ functions also trainable. Rewrite \eqref{eq:psi} for a HOCBF $b_j$ into:
\begin{equation}
    \label{eq:psi'}
    \psi_{i,j}(\mathbf x,t) \coloneqq \dot \psi_{i-1,j}(\mathbf x,t) + p_{i,j}(\mathbf x_0,\bm\theta_p)\alpha_{i,j}\big(\psi_{i-1,j}(\mathbf x,t)\big),
\end{equation}
where $\alpha_{i,j}$ are given class $\mathcal K$ functions, $p_{i,j}(\mathbf x_0,\bm\theta_p)>0$, $i=1,\ldots,m_j$, $j=1,\ldots,M$, $m_j$ is the relative degree of HOCBF $b_j$. $p_{i,j}(\mathbf x_0,\bm\theta_p)$ also depends on initial condition and contains trainable parameters $\bm\theta_p$. The reason it depends on $\mathbf x_0$ will be clear in Section \ref{subsec:learning} as well.

BarrierNet \cite{xiao2021barriernet} is a neural network layer implemented by a differentiable Quadratic Program (dQP) with HOCBF constraints. We add it as the last layer of a neural network controller (with memory) \eqref{eq:nn}, i.e., $\pi(\mathbf x_{0:t},\bm\theta)=\mathbf u^*(t)$ with $\mathbf u^*(t)$ given by:
\begin{equation}
    \label{eq:barriernet}
    \begin{aligned}
    &\mathbf u^*(t) \hspace{-2pt} = & & \hspace{-10pt} \arg\min_{\mathbf u(t)}\frac{1}{2}\mathbf  u(t)^\top \mathbf Q(\mathbf x_{0:t},\bm\theta_q) \mathbf u(t) + \mathbf F^\top(\mathbf x_{0:t},\bm\theta_f) \mathbf u(t)    \\
    & \text{s.t.}& & \hspace{-10pt} L_f^m b_j(\mathbf x,t,\bm\theta_b,\mathbf x_0) + L_gL_f^{m-1}b_j(\mathbf x,t,\bm\theta_b,\mathbf x_0)\mathbf u(t)\\ 
    & & & \hspace{-10pt} + \frac{\partial^mb_j(\mathbf x,t,\bm\theta_b,\mathbf x_0)}{\partial t^m} +O(b_j(\mathbf x,t,\bm\theta_b,\mathbf x_0))\\ 
    & & & \hspace{-10pt} + p_{m,j}(\mathbf x_{0},\bm\theta_p)\alpha_m(\psi_{m-1,j}(\mathbf x,t,\bm\theta_b,\mathbf x_0))\geq 0,\\
    & & &\hspace{-10pt} t=k\Delta t,\ k=0,1,2,\ldots,\ j=1,\ldots,M,
    \end{aligned}
\end{equation}
where $\mathbf Q(\mathbf x_{0:t},\bm\theta_q)\in\mathbb R^{q\times q}$, $\mathbf F(\mathbf x_{0:t},\bm\theta_f)\in\mathbb R^q$, $p_{i,j}(\mathbf x_{0},\theta_p)$, $i=1,\ldots,m$ and $b_j(\mathbf x,t,\bm\theta_b,\mathbf x_0)$ are all given by previous neural network layers with trainable parameters $(\bm\theta_q,\bm\theta_f,\bm\theta_p,\bm\theta_b)\coloneqq\bm\theta$, $\mathbf Q$ is positive definite. $\bm Q^{-1}\mathbf F$ can be interpreted as a reference control. Although in \eqref{eq:barriernet}, $\mathbf Q$, $\mathbf F$, $p_i$ are given by previous layers, they can also be directly trainable parameters. 
The dQP \eqref{eq:barriernet} is solved at each time point $k\Delta t$, $k=0,1,\ldots$ until reaching the time horizon $T$, and the solution $\mathbf u^*(t)$ is applied to the system as a constant for the time period $[k\Delta t, k\Delta t + \Delta t)$. Since \eqref{eq:barriernet} is differentiable, the gradient of $\mathbf u^*(t)$ with respect to $\bm\theta$ can be calculated using the technique in \cite{amos2017optnet}, then $\bm\theta$ can be trained using any methods for training neural networks. Different from the original BarrierNet \cite{xiao2021barriernet}, in \eqref{eq:barriernet} we also make the HOCBF $b$ itself trainable besides $\mathbf Q$, $\mathbf F$ and $p_i$, as it will be detailed in the next subsection. BarrierNet is able to guarantee the satisfaction of all HOCBF constraints. Meanwhile, through training the controller can also optimize a given objective function. 

\subsection{HOCBFs for STL specifications}
\label{subsec:hocbf}
The authors of \cite{lindemann2018control} proposed the idea of using time-varying CBF to ensure the satisfaction of a given STL specification. However, in \cite{lindemann2018control} only relative degree $1$ CBFs are considered and the generation of CBFs is described by examples without explicitly showing the construction rules. In this paper, we extend the method in \cite{lindemann2018control} to HOCBFs and provide a general and algorithmic procedure to construct these HOCBFs. Further, we make these HOCBFs trainable so that the manual design is avoided, and the performance of the controller including these HOCBFs can be further improved through training.  

Consider an STL formula $\varphi$ as in \eqref{eq:stl}. Since for all predicates with negations $\neg\mu$ we can replace the predicate function with $-h(\mathbf x)$ and remove the negation, we assume that the formula $\varphi$ is negation-free without loss of generality. We make the following assumption on the STL formula and the system:
\begin{assumption}
\label{as:feasible}
$\forall \mathbf x(0)\in\mathcal X_0$, $\exists \mathbf u(t)\in\mathcal U$ such that $(\mathbf x, 0)\models\varphi$ where $\mathbf x$ is the solution of system \eqref{eq:system}.
\end{assumption}
Assumption \ref{as:feasible} is not restrictive in practice since if it is not true, for some $\mathbf x_0$ there is no solution for Problem \ref{pb:1}. 

\noindent\textbf{Categories of Predicates.} Suppose that there are $M$ predicates in $\varphi$ and they are given by $\mu_j:\ h_j(\mathbf x)\geq0$, $j=1,\ldots,M$. Now we divide all predicates into three categories:
\begin{itemize}
    \item Category I: predicates that are satisfied at $t=0$ and the starting time of the temporal operator wrapping it is $0$, e.g., $\mu_1$ in $G_{[0,5]}\mu_1$ and $G_{[0,5]}\mu_1\land\mu_2$, where $h_1(\mathbf x_0)\geq0$. These predicates usually define safety requirements, such as obstacle avoidance in robotic applications.
    \item Category II: All predicates wrapped by $F_{[t_a,t_b]}$ that do not belong to Category I, e.g., $\mu_1$ in $F_{[2,5]}\mu_1$ and $\mu_2$ in $F_{[0,5]}\mu_2\land\mu_3$ where $h_2(\mathbf x_0)<0$.
    \item Category III: All predicates wrapped by $G_{[t_a,t_b]}$ that do not belong to Category I,  e.g., $\mu_1$ in $G_{[2,5]}\mu_1\land\mu_2$. Note that Assumption \ref{as:feasible} avoids formulae like $G_{[0,5]}\mu_1$, where $h_1(\mathbf x_0)<0$. 
\end{itemize}

\noindent\textbf{STL Guarantees.} To each predicate $\mu_j$, we assign a (time-varying) HOCBF $b_j$. Since each predicate $\mu_j$ belonging to Category I has already been satisfied at $t=0$, we assign a fixed and time-invariant HOCBF to retain its satisfaction for the required time: 
\begin{equation}
\label{eq:cat1}
b_j(\mathbf x) = h_j(\mathbf x).
\end{equation}
For predicates $\mu_j$ in Category II and III, we assign a trainable time-varying HOCBF:
\begin{equation}
    \label{eq:cat23}
    b_j(\mathbf x,t,\bm\theta_b,\mathbf x_0) = h_j(\mathbf x) + \gamma_j (t,\bm\omega_j(\bm\theta_b,\mathbf x_0)),
\end{equation}
where $\gamma_j(\cdot,\bm\omega_j):[0,T]\rightarrow\mathbb R$ is a function parameterized by $\bm\omega_j$, $\bm\omega_j$ is given by a neural network with input $\mathbf x_0$ and parameters $\bm\theta_b$. Details about this neural network will be discussed in Section \ref{subsec:learning}. In the rest of this subsection, we will omit $\bm\theta_b$ and $\mathbf x_0$ for notation simplicity and just consider $\bm\omega_j$ as a vector. By properly choosing $\gamma_j(t,\bm\omega_j)$, the satisfaction of $b_j(\mathbf x,t)\geq0$, $\forall t\in[0,T]$ can ensure the satisfaction of the predicate $\mu_j$ during the required time slots. Next, we discuss the selection of $\gamma_j(t,\bm\omega_j)$.

For simplicity of notation, we omit the subscript $j$ when it is clear from the context. For a predicate $\mu$ in Category II that is wrapped with $F_{[t_a,t_b]}$, we choose $\gamma$ to be a linear function:
\begin{equation}
    \label{eq:gamma_f}
    \gamma(t,\bm\omega) = \omega_1 + \omega_2t,
\end{equation}
where $\bm\omega = (\omega_1,\omega_2)$, $\omega_1>0$, $\omega_2<0$. Note that other forms of functions are also possible. To make sure the HOCBF $b(\mathbf x,t) = h(\mathbf x) + \gamma(t,\bm\omega)$ guarantees the satisfaction of $\mu$, we add $3$ constraints on $\gamma$:
\begin{subequations}
\label{eq:f}
    \begin{align}
    \label{eq:f0}
        &\gamma(0,\bm\omega)\ > -h(\mathbf x_0),\\
    \label{eq:fb}
        &\gamma(t_b, \bm\omega)  \leq 0,\\
    \label{eq:fa}
        &\gamma(t_a, \bm\omega) > -\sup_{\mathbf x\in\mathbb R^n} h(\mathbf x).
    \end{align}
\end{subequations}
Constraint \eqref{eq:f0} ensures the HOCBF is positive at the initial time, i.e., $b(\mathbf x_0,0)>0$. Constraint \eqref{eq:fb} ensures that $h(\mathbf x) \geq b(\mathbf x,t)\geq0$ before time $t_b$. Given \eqref{eq:f0} and \eqref{eq:fb} the forward invariance of the superlevel set of $b(\mathbf x,t)$ enforces the satisfaction of $F_{[t_a,t_b]}\mu$. The third constraint \eqref{eq:fa} ensures that the superlevel set of $b(\mathbf x, t)$ is nonempty when $t<t_a$. As it will be dicussed later, we delete the HOCBF once $h(\mathbf x)>0$ when $t\geq t_a$, so we do not consider whether the superlevel set of $b(\mathbf x, t)$ is empty after $t_a$.

For a predicate $\mu$ in Category III that is wrapped with $G_{[t_a,t_b]}$, let $\gamma$ be defined as:
\begin{equation}
    \label{eq:gamma_g}
    \gamma(t,\bm\omega) = \omega_1 e^{-\omega_2t}-c,
\end{equation}
where $\bm\omega = (\omega_1,\omega_2)$, $\omega_1>0$, $\omega_2>0$. $c>0$ is a small constant. Again, other forms of functions are possible. Similar to \eqref{eq:f}, we have two constraints on $\gamma$:
\begin{subequations}
\label{eq:g}
    \begin{align}
    \label{eq:g0}
        &\gamma(0,\bm\omega)\ > -h(\mathbf x_0),\\
    \label{eq:ga}
        &\gamma(t_a,\bm\omega)  \leq 0.
    \end{align}
\end{subequations}
The difference is that \eqref{eq:ga} ensures $h(\mathbf x) \geq b(\mathbf x,t)
\geq0$ before time $t_a$ so that $G_{[t_a,t_b]}\mu$ is enforced to be satisfied. When $c>0$ is small enough, the superlevel set of $b(\mathbf x,t)$ is always nonempty under Assumption \ref{as:feasible}. We choose the exponential function \eqref{eq:gamma_g} for \emph{always} instead of a linear function because it satisfies:
\begin{equation*}
    0\leq-\gamma(t,\bm\omega)<c,\ \forall t\in[t_a,t_b].
\end{equation*}
As a result, $b(\mathbf x,t)\geq 0$, i.e., $h(\mathbf x)\geq-\gamma(t,\bm\omega)$, is not over-conservative for $t\in[t_a,t_b]$ when $c>0$ is small enough. As it will be detailed below, the HOCBF is deleted when $t>t_b$, which further mitigates over-conservativeness.

\noindent\textbf{Addressing Conflicts between HOCBFs.} We can construct an HOCBF $b_j$ for each predicate $\mu_j$ in $\varphi$ using \eqref{eq:cat1} or \eqref{eq:cat23}. However, it is possible that the corresponding constraints $b(\mathbf x,t)\geq0$ are conflicting with each other during some time periods. Here, we propose a solution to this problem. We first make an additional assumption:
\begin{assumption}
\label{as:circle}
    Let all predicate functions in Category II and III be in the form of:
    \begin{equation}
    \label{eq:predicate}
        h(\mathbf x) = \pm\big(R-\|l(\mathbf x) - \mathbf o\|_2\big),
    \end{equation}
    where $l:\mathbb R^n\rightarrow \mathbb R^o$ is a differentiable function shared by all predicates mapping state $\mathbf x$ to a vector that we care about, e.g., the location of a robot, and $R\in\mathbb R_{+}$, $\mathbf o\in\mathbb R^o$ are the radius and center of a circular region. We denote the interior (including boundary) and exterior of this region as $\mathcal B(\mathbf o,R)$ and $\mathcal B^{\complement}(\mathbf o,R)$ respectively, where the superscript $\complement$ indicates the complement set in $\mathbb R^o$. 
\end{assumption}

Predicates in the form of \eqref{eq:predicate} can express tasks of reaching ($+$) or avoiding ($-$) a circular region. Together with temporal operators, they can specify rich requirements. We will explain why we assume circular regions. Other type of predicates in Category II and III will be investigated in future work. Next, we give an example to illustrate the idea. 

\begin{example}
Consider a formula $F_{[0,2]}\mu_1\land F_{[2,4]}\mu_2$, where  $h_1(\mathbf x) = R_1-\|l(\mathbf x) - \mathbf o_1\|_2$ and $h_2(\mathbf x) = R_2-\|l(\mathbf x) - \mathbf o_2\|_2$. The invariant sets of the corresponding HOCBFs are $\mathcal B(R_1+\gamma_1(t),\mathbf o_1)$ and $\mathcal B(R_2+\gamma_2(t),\mathbf o_2)$ at time $t$ respectively. To avoid conflicts, (1) we require that these two regions have an nonempty intersection for all $t\in[0,2]$, and (2) we delete $b_1$ once $h_1(\mathbf x)>0$. For the former, it is sufficient to require $\mathcal B(R_1+\gamma_1(2),\mathbf o_1)\cap \mathcal B(R_2+\gamma_2(2),\mathbf o_2)\neq \emptyset$, that is, $\gamma_2(2)\geq\|\mathbf o_1,\mathbf o_2\|_2 - \gamma_1(2) - R_1 - R_2$.
\end{example}

Specifically, for each predicate $\mu$ wrapped with $G_{[t_a,t_b]}$, we delete the corresponding HOCBF at time $t=t_b$. For predicates in the form of $F_{[t_a,t_b]}\mu$, we delete the corresponding HOCBFs once $h(\mathbf x)>0$ after $t_a$. For predicates in the form of $F_{[t_a,t_b]}\land_{j=1}^N\mu_j$, we delete the corresponding HOCBFs for all $\mu_j$ together once $h_j(\mathbf x)>0$ for all $j$ after $t_a$. Note that $G_{t_a,t_b}(\mu_1\land\mu_2)$ is equivalent to $G_{[t_a,t_b]}\mu_1\land G_{[t_a,t_b]}\mu_2$, but $F_{t_a,t_b}(\mu_1\land\mu_2)$ is different from $F_{[t_a,t_b]}\mu_1\land F_{[t_a,t_b]}\mu_2$. The latter allows asynchronous satisfaction of $\mu_1$ and $\mu_2$.

Next, we reorder all predicates $\mu_j$ according to the ending time points $t_b^j$ of the temporal operators wrapping them such that $t_b^1\leq t_b^2\leq \ldots\leq t_b^M$. For the predicate $\mu_j$ with ending time $t_b^j$, $j=2,\ldots,M$, we add $j-1$ additional constraints besides \eqref{eq:f} or \eqref{eq:g}:
\begin{equation}
\label{eq:add_cons}
\begin{aligned}
    &\gamma_j(t_b^1) \geq s_1s_j\|\mathbf o_1,\mathbf o_j\|_2 - \gamma_1(t_b^1) - s_1R_1 - s_jR_j + D_{j,1},\\ 
    &\gamma_j(t_b^2) \geq s_2s_j\|\mathbf o_2,\mathbf o_j\|_2 - \gamma_2(t_b^2) - s_2R_2 - s_jR_j+D_{j,2},\\ 
    & \qquad\qquad\qquad \vdots\\
    &\gamma_j(t_b^{j-1}) \geq s_{j-1}s_j\|\mathbf o_{j-1},\mathbf o_j\|_2 - \gamma_{j-1}(t_b^{j-1}) \\
    &\qquad\qquad\qquad- s_{j-1}R_{j-1} - s_jR_{j} + D_{j,j-1},
\end{aligned}
\end{equation}
where $\bm\omega$ is omitted, $h_k(\mathbf x) = s_k\big(R_k-\|l(\mathbf x) - \mathbf o_k\|_2\big)$, $s_k\in\{-1,1\}$, $D_{j,k}=min(s_k+s_j,0)\times \inf$, i.e., when $s_k=s_j=-1$, we release the constraint. For predicates in Category I, let $\gamma(t)=0$ for all $t\in[0,T]$. Intuitively, consider $\mu_1$ and $\mu_2$ with $t_b^1\leq t_b^2$. If both of them are reachability requirements, \eqref{eq:add_cons} enforces $\mathcal B(R_1+\gamma_1(t_b^1),\mathbf o_1)\cap \mathcal B(R_2+\gamma_2(t_b^1),\mathbf o_2)\neq\emptyset$. If $\mu_1$ is a reachability task and $\mu_2$ is an avoidance task, \eqref{eq:add_cons} enforces $\mathcal B(R_1+\gamma_1(t_b^1),\mathbf o_1)\cap \mathcal B^\complement(R_2-\gamma_2(t_b^1),\mathbf o_2)\neq\emptyset$. If $\mu_1$ is an avoidance task and $\mu_2$ is a reachability task, \eqref{eq:add_cons} enforces $\mathcal B^\complement(R_1-\gamma_1(t_b^1),\mathbf o_1)\cap \mathcal B(R_2+\gamma_2(t_b^1),\mathbf o_2)\neq\emptyset$. If both are avoidance tasks, no extra condition is needed as $\mathcal B^\complement(R_1-\gamma_1(t_b^1),\mathbf o_1)\cap \mathcal B^\complement(R_2-\gamma_2(t_b^1),\mathbf o_2)$ is always nonempty.

With the construction of the HOCBFs and the corresponding constraints described above, we have:

\begin{theorem}
\label{thm:construct}
    Assume we have a STL formula $\varphi$, a system \eqref{eq:system} satisfying Assumptions \ref{as:feasible} and \ref{as:circle}, a set of HOCBFs constructed by \eqref{eq:cat1} and \eqref{eq:cat23} that satisfy all constraints \eqref{eq:f}, \eqref{eq:g} and \eqref{eq:add_cons}, and a sequence of functions $\psi_i$ for each HOCBF as in \eqref{eq:psi}, where $\psi_i(\mathbf x_0,0)\geq0$, $i=1,\ldots,m$. Then a control law $\mathbf u(t)$ that satisfies \eqref{eq:hocbf} for all HOCBFs is guaranteed to satisfy specification $\varphi$. 
\end{theorem}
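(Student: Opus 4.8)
The plan is to reduce the claim to the forward-invariance guarantee of Theorem~\ref{thm:cbf} applied separately to each HOCBF $b_j$, and then to translate the resulting trajectory-level inequality $b_j(\mathbf x(t),t)\geq 0$ into the qualitative satisfaction of the predicate $\mu_j$ over exactly the time window demanded by the temporal operator wrapping it. First I would note that the hypothesis $\psi_{i,j}(\mathbf x_0,0)\geq 0$ (with $\psi_{0,j}=b_j$, whose value at $(\mathbf x_0,0)$ is nonnegative by the definition of Category~I for \eqref{eq:cat1} and by \eqref{eq:f0}/\eqref{eq:g0} for \eqref{eq:cat23}) places $\mathbf x_0$ in $\mathcal C_{1}^{j}(0)\cap\dots\cap\mathcal C_{m_j}^{j}(0)$. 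Since the controller satisfies \eqref{eq:hocbf} for every HOCBF that is currently active, Theorem~\ref{thm:cbf}, applied on the active interval (truncating the horizon at the deletion time of $b_j$ when there is one), yields $b_j(\mathbf x(t),t)\geq 0$ throughout that interval; continuity of $t\mapsto b_j(\mathbf x(t),t)$ extends the inequality to its closure.

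The core of the argument is then a case analysis on the category of each predicate. For a Category~I predicate under $G_{[0,t_b]}\mu_j$, the barrier $b_j=h_j$ is kept active on $[0,t_b]$, so $h_j(\mathbf x(t))\geq 0$ there and $G_{[0,t_b]}\mu_j$ holds; if such a predicate sits under an $F$ operator it is already true at $t=0$. For a Category~III predicate under $G_{[t_a,t_b]}\mu_j$, the exponential $\gamma_j$ of \eqref{eq:gamma_g} is strictly decreasing (as $\omega_2>0$), so \eqref{eq:ga} gives $\gamma_j(t)\leq\gamma_j(t_a)\leq 0$ for all $t\in[t_a,t_b]$, and together with $b_j(\mathbf x(t),t)\geq 0$ on $[0,t_b]$ this gives $h_j(\mathbf x(t))=b_j(\mathbf x(t),t)-\gamma_j(t)\geq 0$ on $[t_a,t_b]$, i.e. $G_{[t_a,t_b]}\mu_j$. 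For a Category~II predicate under $F_{[t_a,t_b]}\mu_j$ I would split on whether $b_j$ is deleted inside $[t_a,t_b]$: if it is deleted at some $\tau\le t_b$ then $h_j(\mathbf x(\tau))\ge 0$ by the deletion rule and continuity, so $F_{[t_a,t_b]}\mu_j$ is witnessed at $\tau$; otherwise $b_j$ is active on all of $[0,t_b]$ and \eqref{eq:fb} ($\gamma_j(t_b)\le 0$) forces $h_j(\mathbf x(t_b))\ge -\gamma_j(t_b)\ge 0$, again witnessing $F$. The conjunctive cases follow the same template: $G_{[t_a,t_b]}(\bigwedge_k\mu_k)\equiv\bigwedge_k G_{[t_a,t_b]}\mu_k$ reduces to the previous case, while for $F_{[t_a,t_b]}(\bigwedge_k\mu_k)$ the joint deletion rule produces a single instant --- either the common deletion time $\tau\le t_b$ or $t_b$ itself --- at which all $\mu_k$ are simultaneously true. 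Since $\varphi$ in the fragment \eqref{eq:stl} is a conjunction of such temporal subformulas, each shown to hold at time $0$, we conclude $(\mathbf x,0)\models\varphi$.

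The hard part will be the bookkeeping around the deletion of HOCBFs: Theorem~\ref{thm:cbf} is stated for barriers in force over the whole horizon, so I must justify that truncating at each (possibly trajectory-dependent) deletion time is legitimate, that the control still satisfies \eqref{eq:hocbf} for every barrier that remains active, and, crucially, that dropping a barrier never discards information needed for satisfaction --- which is exactly what the case analysis ensures, since a Category~II barrier is only ever deleted once its associated $F$-obligation has already been met. It is worth stating explicitly that Assumption~\ref{as:circle}, the ordering of the $t_b^j$, and the constraints \eqref{eq:add_cons} and \eqref{eq:fa} play no role in this correctness argument; they only guarantee that the barrier constraints are mutually compatible so that a controller satisfying \eqref{eq:hocbf} for all HOCBFs exists under Assumption~\ref{as:feasible}, which is precisely the hypothesis we start from.
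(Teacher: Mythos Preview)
Your proposal is correct and follows essentially the same route as the paper: invoke Theorem~\ref{thm:cbf} on each active HOCBF to obtain $b_j(\mathbf x(t),t)\geq 0$, then exploit the monotone decrease of $\gamma_j$ together with \eqref{eq:fb} or \eqref{eq:ga} to extract $h_j(\mathbf x(t'))\geq 0$ at the times required by the wrapping temporal operator. Your version is in fact more carefully written than the paper's terse proof---you make the deletion bookkeeping explicit, you split the Category~II case on whether the barrier is deleted before $t_b$, and you correctly observe that Assumption~\ref{as:circle}, \eqref{eq:fa}, and \eqref{eq:add_cons} serve only to ensure feasibility rather than correctness---but the underlying argument is the same.
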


\begin{proof}
Constraints \eqref{eq:f0} and \eqref{eq:g0} ensure that $b(\mathbf x_0,0)\geq 0$ for all HOCBFs. Since $\psi_i(\mathbf x_0,0)>0$, $i=1,\ldots,m$ for all HOCBFs, according to Theorem \ref{thm:cbf}, a control law $\mathbf u(t)$ that satisfies \eqref{eq:hocbf} for all $b_j(\mathbf x,t)$ ensures $b_j(\mathbf x,t)\geq0$, for all $b_j$ that have not been deleted at time $t$. Since $\gamma(t)$ in \eqref{eq:gamma_f} and \eqref{eq:gamma_g} are non-decreasing, \eqref{eq:fb} ensures $\exists t'\in[t_a,t_b]$, $h(\mathbf x(t'))\geq0$ for $F{[t_a,t_b]}\mu$, while \eqref{eq:ga} ensures $\forall t'\in[t_a,t_b]$, $h(\mathbf x(t'))\geq0$ for $G{[t_a,t_b]}\mu$. For a formula in the form of $F_{[t_a,t_b]}\land_{j=1}^N\mu_j$, \eqref{eq:fb} ensures $\exists t'\in[t_a,t_b]$, $h_j(\mathbf x(t'))\geq0$ for all $j$. Hence, all predicates are satisfied at the required time, and the STL specification $\varphi$ is satisfied.
\end{proof}

\subsection{Learning Robust Controllers}
\label{subsec:learning}
Theorem \ref{thm:construct} ensures the satisfaction of the STL specification when all HOCBFs constraints are satisfied. Then we can use BarrierNet \eqref{eq:barriernet} to obtain a controller that satisfies all HOCBFs constraints. In this subsection, we first explain why in \eqref{eq:barriernet}, $b(\mathbf x,t,\bm\theta_b,\mathbf x_0)$ and $p_{i,j}(\mathbf x_0,\bm\theta_p)$ all depend on the initial condition $\mathbf x_0$. Then we describe the structure of the entire neural network controller $\pi(\mathbf x_{0:t},\bm\theta)$. Finally, we introduce the training process of the controller. 

\begin{figure}
    \centering
    \includegraphics[width=8cm]{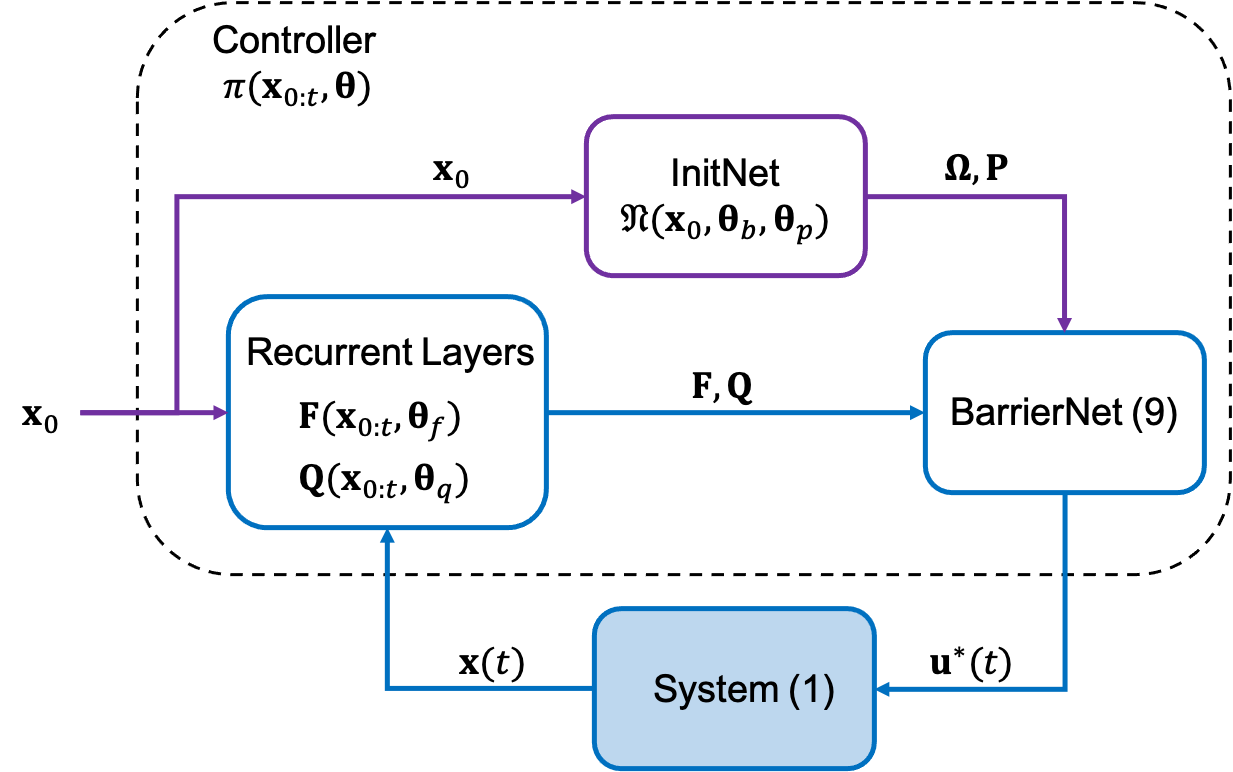}
    \caption{Overall structure of the controller. Purple parts are only executed at $t=0$, while blue parts are executed repeatedly. The dashed box indicates the controller $\pi(\mathbf x_{0:t},\bm\theta)$.}
    \label{fig:overall}
\end{figure}

\noindent\textbf{Parameters Depending on Initial Conditions.} Consider a predicate belonging to Category II or III with corresponding HOCBF $b(\mathbf x,t) = h(\mathbf x) + \gamma(t,\bm\omega)$. Since constraints \eqref{eq:f0} and \eqref{eq:g0} on parameters $\bm\omega$ contain the initial condition $\mathbf x_0$, different $\bm\omega$ should be used for different initial condition $\mathbf x_0$. Hence, we use a neural network whose input is $\mathbf x_0$ to provide $\bm\omega$, denoted as $\bm\omega(\mathbf x_0,\bm\theta_b)$. As a result, the HOCBF also depends on $\mathbf x_0$ and contains trainable parameters $\bm\theta_b$, denoted as $b(\mathbf x,t,\bm\theta_b,\mathbf x_0)$.

On the other hand, to use HOCBFs to guarantee set-invariance, we also need to make sure $\psi_i(\mathbf x_0,0)\geq0$ for all $i=1,\ldots,m$. Since $b(\mathbf x_0,0)> 0$, we can always find a large enough $p_i$ such that $\psi_i(\mathbf x_0,0)\geq0$ according to \eqref{eq:psi'}. These constraints on $p_i$ also depend on $\mathbf x_0$. Hence, we use a neural network with input $\mathbf x_0$ and parameters $\bm\theta_p$ to provide $p_i$, denoted as $p_i(\mathbf x_0,\bm\theta_p)$.

\noindent\textbf{Neural Network Controller Structure.} In practice, we use one neural network referred to as InitNet to provide all parameters depending on $\mathbf x_0$:
\begin{equation}
    \label{eq:nn_init}
    [\bm\Omega^\top \mathbf P^\top] = \mathfrak N(\mathbf x_0, \bm\theta_b,\bm\theta_p),
\end{equation}
where $\bm\Omega=[\bm\omega_1^\top\ldots\bm\omega_N^\top]^\top\in\mathbb R^{2N}$, $\mathbf P$ is the concatenation of all $p_i$ in \eqref{eq:psi'} for all HOCBFs, $\mathfrak N$ is the neural network parameterized by trainable parameters $\bm\theta_b$ and $\bm\theta_p$. We transform constraints on $\gamma$ \eqref{eq:f}, \eqref{eq:g} and \eqref{eq:add_cons} into constraints on $\bm\Omega$. 
For constraints in the form of $\omega\in[\underline\omega,\overline\omega]$ we apply a Sigmoid function on the last layer of $\mathfrak N$ while for constraints in the form of $\omega\in[\underline\omega,\infty)$ or $\omega\in(-\infty,\overline\omega]$ we apply a Softplus function. In this way, $\bm\Omega$ satisfies all constraints in \eqref{eq:f}, \eqref{eq:g} and \eqref{eq:add_cons}. Similarly, for $p_i$, $i=1,\ldots, m-1$, we add constraints $p_i > max\big[-\dot\psi_{i-1}(\mathbf x,0)/\alpha_i\big(\psi_{i-1}(\mathbf x,0)\big),0\big]$ which are also implemented by Softplus functions. 

InitNet is only used at time $t=0$ to provide a set of HOCBFs and the corresponding class $\mathcal K$ functions, which are fixed after $t=0$. Then we use another (recurrent) neural network parameterized by $\bm\theta_q$ and $\bm\theta_f$ to provide $\mathbf Q(\mathbf x_{0:t},\bm\theta_q)$ and $\mathbf F(\mathbf x_{0:t},\bm\theta_f)$ at each discrete time point. The whole controller $\pi(\mathbf x_{0:t},\bm\theta)=\mathbf u^*$  contains $\mathbf Q(\mathbf x_{0:t},\bm\theta_q)$, $\mathbf{F}(\mathbf x_{0:t},\bm\theta_f)$, $\mathfrak N(\mathbf x_0, \bm\theta_b,\bm\theta_p)$ and the dQP \eqref{eq:barriernet} with $\bm\theta=(\bm\theta_q,\bm\theta_f,\bm\theta_b,\bm\theta_p)$. The overall structure of the controller is shown in Fig. \ref{fig:overall}

\noindent\textbf{Training BarrierNet.} Similar to \cite{liu2021safe}, we randomly sample $V$ initial conditions $\mathbf x_0^v$, $v=1,\ldots,V$. We apply the system dynamics \eqref{eq:system} with the controller $\pi$ until reaching the time horizon $T$ to get $V$ state and control trajectories. We evaluate their STL robustness and cost $J$, and then use the mean value to approximate the expectation. Formally, we rewrite \eqref{eq:goal} into:
\begin{equation}
    \label{eq:cost}
    \begin{aligned}
       \bm\theta^* &= \arg\max_{\bm\theta} \frac{1}{V}\sum_{v=1}^V\big[\rho(\varphi,\mathbf x^v, 0) - J(\mathbf u^v)\big] \\
       \text{s.t.}\ & \dot{\mathbf x}^v = f(\mathbf x^v) + g(\mathbf x^v)\pi(\mathbf x^v_{0:t},\bm\theta),\ v=1,\ldots,V,
    \end{aligned}
\end{equation}
where the superscript $v$ indicates the $v^{th}$ sample. We substitute the constraint (dynamics) into the objective function to make it an unconstrained optimization problem. Since the QP \eqref{eq:barriernet} is differentiable with respect to its parameters using the technique in \cite{amos2017optnet}, we backpropagate the gradient of the objective funtion in \eqref{eq:cost} through the QP to all parameters $\bm\theta$. The gradients of the STL robustness are calculated analytically and automatically using an adapted version of STLCG \cite{leung2020backpropagation} that use the robustness in \cite{liu2022robust}. Then we update the parameters using the gradient. Note that at each optimization step we randomly resample $V$ initial conditions to have a better exploration of the initial set $\mathcal X_0$ and we use the stochastic optimizer Adam \cite{kingma2014adam} to train the parameters.

The following corollary from Theorem \ref{thm:construct} states that our network controller is correct: 

\begin{corollary}
    \label{thm:correct}
    Consider an STL formula $\varphi$ and a system \eqref{eq:system} satisfying Assumptions \ref{as:feasible} and \ref{as:circle}. Then any neural network controller with BarrierNet \eqref{eq:barriernet} as the last layer guarantees that the solution of system \eqref{eq:system} starting from any initial condition $\mathbf x_0\in\mathcal X_0$ satisfies the specification $\varphi$. 
\end{corollary}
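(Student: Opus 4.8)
The plan is to obtain Corollary~\ref{thm:correct} as an immediate consequence of Theorem~\ref{thm:construct}. Theorem~\ref{thm:construct} already reduces correctness of the closed loop to four facts: (i) every predicate carries an HOCBF of the form \eqref{eq:cat1} or \eqref{eq:cat23}; (ii) the time-varying offsets $\gamma_j$ satisfy the constraints \eqref{eq:f}, \eqref{eq:g}, and \eqref{eq:add_cons}; (iii) $\psi_{i,j}(\mathbf x_0,0)\ge 0$ for every $i,j$; and (iv) the applied control $\mathbf u(t)$ satisfies the HOCBF inequality \eqref{eq:hocbf} for every not-yet-deleted HOCBF along the trajectory. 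So the proof amounts to checking that the BarrierNet architecture of Section~\ref{subsec:learning} forces (i)--(iv) for \emph{any} value of the trainable parameters $\bm\theta$, after which Theorem~\ref{thm:construct} applies verbatim.

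Facts (i)--(iii) are structural. By construction each $\mu_j$ in Category~I gets $b_j=h_j$ and each $\mu_j$ in Categories~II/III gets $b_j=h_j+\gamma_j$, giving (i). For (ii), the inequalities \eqref{eq:f}, \eqref{eq:g}, \eqref{eq:add_cons} are box bounds and affine inequalities on the components of $\bm\Omega$; since the last layer of InitNet $\mathfrak N$ passes through Sigmoid units for two-sided bounds and Softplus units for one-sided bounds, $\bm\Omega=\mathfrak N(\mathbf x_0,\bm\theta_b,\bm\theta_p)$ lies in the feasible polytope for every $\bm\theta_b$; in particular \eqref{eq:f0} and \eqref{eq:g0} give $\psi_{0,j}(\mathbf x_0,0)=b_j(\mathbf x_0,0)>0$. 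For (iii) one inducts on $i$: given $\psi_{i-1,j}(\mathbf x_0,0)>0$, the Softplus parametrization enforces $p_{i,j}(\mathbf x_0,\bm\theta_p)>\max\!\big[-\dot\psi_{i-1,j}(\mathbf x_0,0)/\alpha_{i,j}\big(\psi_{i-1,j}(\mathbf x_0,0)\big),0\big]$, so \eqref{eq:psi'} yields $\psi_{i,j}(\mathbf x_0,0)>0$, closing the induction up to $i=m_j$. Fact (iv) is where the dQP enters: the constraints of \eqref{eq:barriernet} are exactly \eqref{eq:hocbf} with the decision variable $\mathbf u$ in place of the supremum, so any feasible point, and in particular the minimizer $\mathbf u^*(t)$, satisfies \eqref{eq:hocbf} for every active HOCBF. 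Combining (iii), (iv), and Theorem~\ref{thm:cbf} applied on each HOCBF's active interval gives $\psi_{i,j}(\mathbf x(t),t)\ge 0$ along the trajectory, which is the forward-invariance hypothesis of Theorem~\ref{thm:construct}; hence $\varphi$ holds, and since nothing used a particular $\bm\theta$, the claim holds for every BarrierNet controller.

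The main obstacle is justifying that \eqref{eq:barriernet} is \emph{always feasible} and that its piecewise-constant solution legitimately plays the role of the Lipschitz control demanded by Theorems~\ref{thm:cbf} and~\ref{thm:construct}. For feasibility one invokes Assumptions~\ref{as:feasible} and~\ref{as:circle}, the conflict-resolution constraints \eqref{eq:add_cons}, and the deletion scheme: these keep the intersection of the active superlevel sets nonempty at each $t$, and, together with the relative-degree property that $L_gL_f^{m_j-1}b_j$ does not vanish (each HOCBF constraint is then a genuine half-space in $\mathbf u$), ensure the finitely many constraints admit a common solution, so the QP returns a control rather than an infeasibility certificate. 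For the sampling issue one argues, as is standard for sampled-data CBFs, that solving \eqref{eq:barriernet} at $t=k\Delta t$ and holding $\mathbf u^*$ on $[k\Delta t,(k+1)\Delta t)$ preserves $\psi_{i,j}\ge 0$ once $\Delta t$ is small relative to the Lipschitz constants of $f$, $g$, and the $b_j$ (equivalently, by adding a small margin to the HOCBF constraint). Finally, the deletions only ever remove an HOCBF after its predicate obligation has already been discharged — exactly as established in the proof of Theorem~\ref{thm:construct} — so they cannot invalidate any still-pending requirement, and the argument above goes through unchanged.
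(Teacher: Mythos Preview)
Your proposal is correct and takes the same approach as the paper: the corollary is obtained directly from Theorem~\ref{thm:construct}. The paper's proof is in fact a single sentence (``This follows immediately from Theorem~\ref{thm:construct}''), so your explicit verification of hypotheses (i)--(iv) and your discussion of QP feasibility and inter-sampling effects go well beyond what the paper itself supplies---indeed, the paper explicitly defers the inter-sampling issue to future work in the remark immediately following the corollary.
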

\begin{proof}
    This follows immediately from Theorem \ref{thm:construct}.
\end{proof}
\begin{remark}
    Since we can only feed discrete data into neural networks, the QP \eqref{eq:barriernet} is solved in discrete time. Hence, the inter-sampling effect should also be considered to guarantee the correctness of the controller. A possible solution to address this effect is the event-triggered framework \cite{xiao2021event}. We will consider this in future work. 
\end{remark}

We summarize our solution to Problem \ref{pb:1} in Algorithm \ref{alg:1}.
\vspace{-15pt}
\begin{algorithm}
\KwIn{System dynamics \eqref{eq:system} and STL formula $\varphi$}
\KwOut{Robust and correct controller $\pi(\mathbf x_{0:t},\bm\theta^*)$}
Construct HOCBFs from $\varphi$ using \eqref{eq:cat1}, \eqref{eq:cat23}\;
Set up constraints on $\bm\omega$ using \eqref{eq:f}, \eqref{eq:g}, \eqref{eq:add_cons}\;
Set up constraints on $p_{i,j}$ using \eqref{eq:psi'}\;
Initialize controller $\pi(\mathbf x_{0:t},\bm\theta)$ including $\mathbf Q(\mathbf x_{0:t},\bm\theta_q)$, $\mathbf{F}(\mathbf x_{0:t},\bm\theta_f)$, InitNet \eqref{eq:nn_init} and the dQP \eqref{eq:barriernet}\;
\Repeat{Convergence; \Return $\bm\theta^*$}{
Sample $V$ initial conditions $x_0^v$\;
Obtain $\bm\Omega$, $\mathbf P$ for each $x_0^v$ from InitNet \eqref{eq:nn_init}\;
Evaluate \eqref{eq:cost} by applying $\pi(\mathbf x_{0:t},\bm\theta)$ to \eqref{eq:system}\;
Compute gradient of \eqref{eq:cost} w.r.t. $\bm\theta$\;
Update $\bm\theta$ using Adam optimizer\;
}
 \caption{Construction and training of controller}\label{alg:1}
\end{algorithm}
\vspace{-10pt}

\section{Simulations}
\label{sec:sim}
In this section, we demonstrate the efficacy of our approach via simulations and compare it with existing algorithms. 

\noindent\textbf{Environment and STL setup.} Consider a 2D robot navigation problem. The dynamics of the robot is given as:
\begin{equation}
    \label{eq:robot}
    \begin{bmatrix} \dot{p}_x\\ \dot{p}_y\\ \dot{v}_x\\ \dot{v}_y\end{bmatrix} = \begin{bmatrix} v_x\\ v_y\\ 0\\ 0\end{bmatrix} + \begin{bmatrix} 0&0\\ 0&0\\ 1&0\\ 0&1\end{bmatrix} \begin{bmatrix}a_x\\a_y\end{bmatrix},
\end{equation}
where $\mathbf x=[p_x\ p_y\ v_x\ v_y]^\top$, $\mathbf u=[a_x\ a_y]^\top$, $[p_x\ p_y]^\top$ is the 2D position, $[v_x\ v_y]^\top$ is the velocity, and $[a_x\ a_y]^\top$ is the acceleration of the robot. We assume the control has no bounds in this case and use the L2 norm for the cost function in \eqref{eq:cost} with a coefficient of $0.003$ to punish large accelerations. Consider the environment shown in Fig. \ref{fig:env}. $\mathbf x_0$ is uniformly sampled in the region $Init$ with zero velocity. We discretize the system with a time interval of $0.1s$. The task for the robot is given by an STL formula:
\begin{equation}
    \label{eq:task}
    \varphi = F_{[0,2]}Reg_1 \land F_{[2,5]}Reg_2 \land G_{[0,5]}(\neg Obs_1 \land \neg Obs_2),
\end{equation}
where $Reg_i$ indicates $R_i-\|l(\mathbf x) - \mathbf o_i\|_2\geq 0$, $i=1,2$, $l(\mathbf x) = [p_x\ p_y]^\top$. $Obs_i$ is a superellipse:
\begin{equation}
    \label{eq:obs}
    1-\sqrt[4]{(\frac{p_x-o_{x,i}}{a_i})^4 + (\frac{p_y-o_{y,i}}{b_i})^4}\geq 0,
\end{equation}
$i=1,2$. Here, $Reg_i$ belongs to Category II and $Obs_i$ belongs to Category I, $i=1,2$. In plain English, the STL formula $\varphi$ requires the robot to eventually visit $Reg_1$ within $[0,2]$ and eventually visit $Reg_2$ within $[2,5]$, while always avoid obstacles $Obs_1$ and $Obs_2$. The time horizon of $\varphi$ is $5$. For all $4$ predicates, the corresponding HOCBFs have a relative degree of $2$ with respect to system \eqref{eq:robot}. In this example, we fixed $\mathbf Q(\mathbf x, \bm\theta_q)$ to an identical matrix, so the output of the previous layers at $t>0$ is just $\mathbf F(\mathbf x, \bm\theta_f)$, which can be interpreted as a reference control. Since this task does not require back and forth motions, a recurrent neural network is not necessary for $\mathbf F$. Hence, both $\mathbf F$ and InitNet are implemented as neural networks with $3$ fully connected layers. For the robustness function, we use the exponential robustness given in \cite{liu2022robust}. Specifically, for a predicate $\mu:h(\mathbf x)\geq0$, let $\rho(\mu, \mathbf x, t) = h(\mathbf x(t))$, $\rho(\neg\mu, \mathbf x, t) = -h(\mathbf x(t))$. Since ``always" and ``eventually" can be regard as conjunction and disjunction over time, and disjunction can be replaced by conjunction and negation using De Morgan law, we only need to give the definition of the robustness for conjunctions. Consider the conjunction over $M$ subformulas with robustness $\rho_1,\ldots,\rho_M$. We first define an effective robustness measure, denoted by $\rho_i^{conj}$, $i=1,\ldots,M$, for each subformula:
\begin{equation}
\label{eq:exp-conj-eff}
    \rho_{i}^{conj} \coloneqq \left\{
    \begin{aligned}
    &\rho_{min}e^{\frac{\rho_i-\rho_{min}}{\rho_{min}}} \quad & \rho_{min} < 0 \\
    &\rho_{min}(2 - e^{\frac{\rho_{min}-\rho_i}{\rho_{min}}}) \quad & \rho_{min} > 0 \\
    & 0 \quad & \rho_{min} = 0
    \end{aligned}\right.
\end{equation}
where $\rho_{min}=\min(\rho_1,\ldots,\rho_M)$. 
Then the exponential robustness for conjunction is defined as:
\begin{equation}
\label{eq:exp-and}
    \mathcal A^{exp}(\rho_1,\ldots,\rho_M) = \beta\rho_{min} +  (1-\beta)\frac{1}{M}\sum_{i=1}^M \rho_i^{conj},
\end{equation}
where $\beta\in[0,1]$ balances the contribution between $\rho_{min}$ and the mean of $\rho_i^{conj}$ (same sign as $\rho_{min}$). The exponential robustness is sound in the sense that $\rho(\varphi,\mathbf x,0)\geq0$ if and only if $(\mathbf x,0)\models \varphi$. More details about the robustness function can be found in \cite{liu2022robust}. 

\begin{figure}
     \centering
     \begin{subfigure}[b]{0.235\textwidth}
         \centering
         \includegraphics[height=4.8cm]{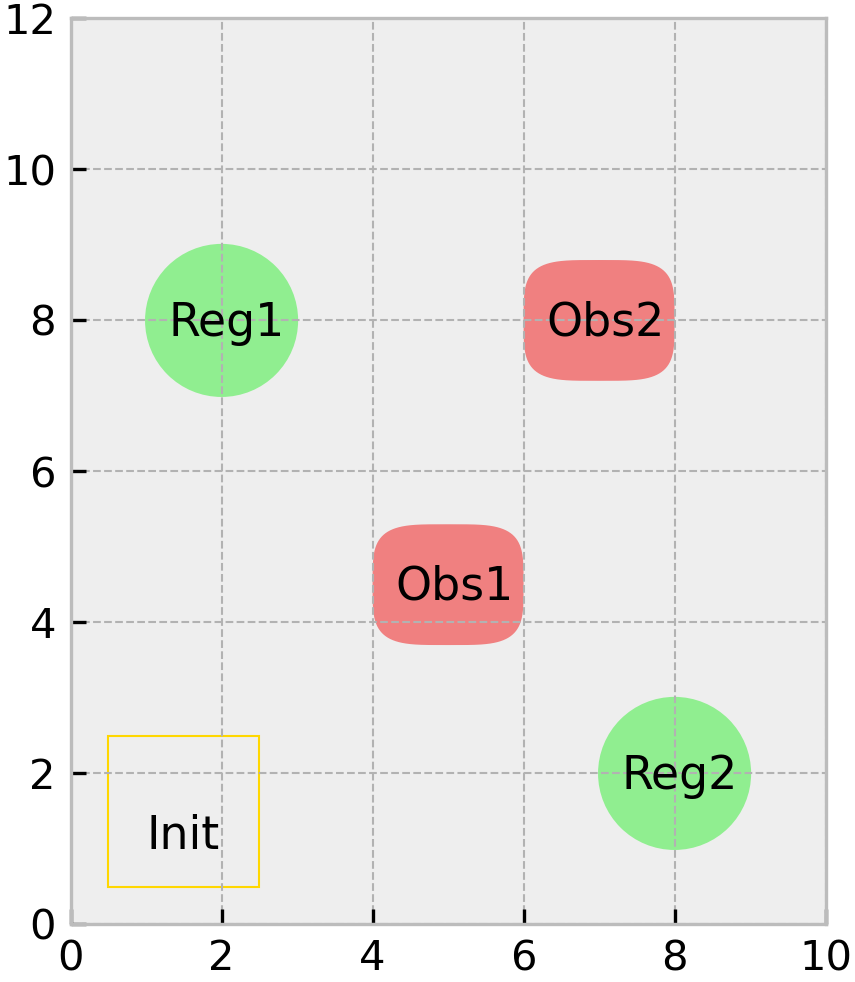}
         \caption{\small Environment}
         \label{fig:env}
     \end{subfigure}
     \begin{subfigure}[b]{0.235\textwidth}
         \centering
         \includegraphics[height=4.8cm]{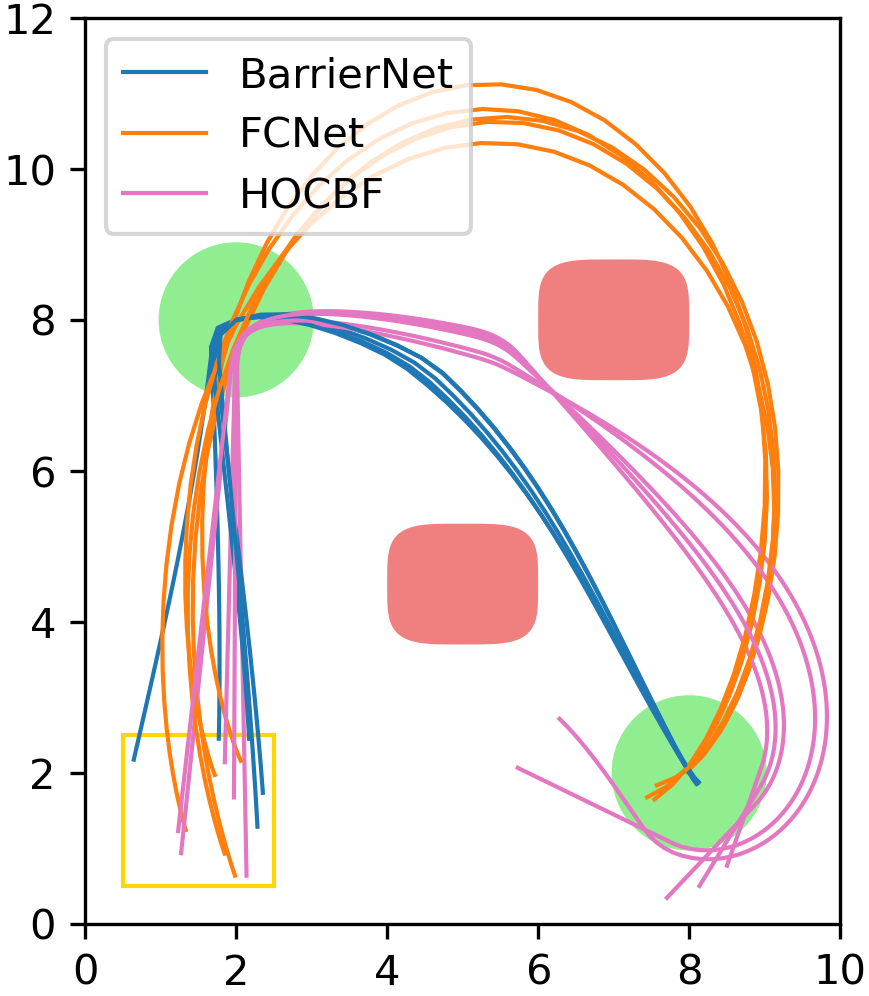}
         \caption{\small Trajectories}
         \label{fig:traj}
     \end{subfigure}
    \caption{(a) The 2D environment we consider. (b) Sampled trajectories with random initial conditions using three methods: BarrierNet (developed in this paper), FCNet, and HOCBFs. }
\label{fig:trajorties}
\end{figure}

\noindent\textbf{Comparison setup.} We construct the HOCBFs and train the controller proposed in this paper. 
We compare the results with our previous work \cite{liu2021safe} where a neural network controller without BarrierNet, i.e., a Fully Connected Neural Network (FCNet) is trained for an STL task. It is equivalent to directly use the reference control $\mathbf F$. We refer to these two controllers as BarrierNet and FCNet respectively. To make the comparison fair, we assume that the system dynamics are known for \cite{liu2021safe}. We use the same objective function, optimizer, and the same neural network architectures, i.e., the FCNet has the same structure with $\mathbf F(\mathbf x, \bm\theta_f)$. Training curves are illustrated in Fig. \ref{fig:curves}. Meanwhile, we directly apply the approach in \cite{lindemann2018control} (extended to HOCBF) without any learning, i.e., we construct a set of HOCBFs with fixed parameters and solve the QP \eqref{eq:barriernet} with $\mathbf F=\mathbf 0$. The parameters are randomly chosen but satisfy all constraints \eqref{eq:f}, \eqref{eq:g} and \eqref{eq:add_cons}. We refer to this approach as HOCBF. The resulting average values of the objective function and the robustness starting from random initial conditions are shown in Fig.\ref{fig:curves} with dashed lines. Sampled trajectories using the three approaches with random initial conditions are shown in Fig. \ref{fig:traj}.

\noindent\textbf{Analysis and Discussion.} In Fig. \ref{fig:ro} we can see that when using BarrierNet, the robustness value is positive (which means the STL specification is satisfied) from the beginning of the training. This demonstrates the correctness of Corollary \ref{thm:correct}. As for FCNet, it takes about $150$ iterations to get a positive mean robustness value. The results of directly applying HOCBFs with randomly chosen parameters are similar as using BarrierNet with an untrained neural network, i.e., at the first iteration during training. It also satisfies the specification but is less robust than using BarrierNet after training. As shown in Fig. \ref{fig:traj} the robot reaches the center of $Reg_2$ with both BarrierNet and FCNet after training while only reaches the boundary of $Reg_2$ when directly using HOCBFs. The robot leaves $Reg_2$ after the corresponding HOCBF is deleted. The final robustness and objective function values of BarrierNet are both higher than FCNet. Without the guidance of HOCBFs, the FCNet controller only finds a sub-optimal solution which steers the robot further away to avoid the obstacles. Since QP can be solve very efficiently, all three methods can execute fast during testing which indicates the ability of real-time control. 

\begin{figure}
     \centering
     \begin{subfigure}[b]{0.235\textwidth}
         \centering
         \includegraphics[height=4.8cm]{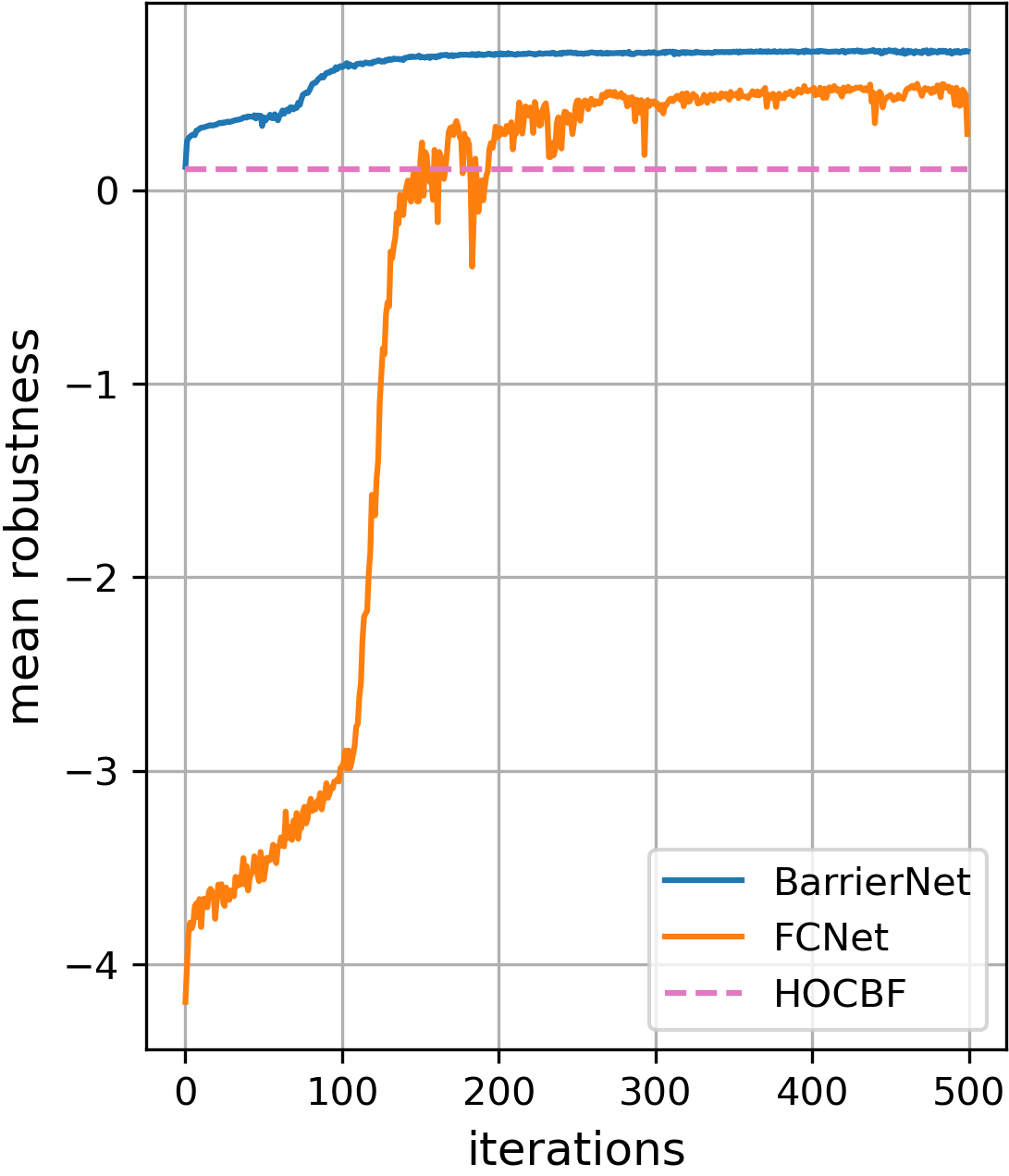}
         \caption{\small Robustness}
         \label{fig:ro}
     \end{subfigure}
     \ 
     \begin{subfigure}[b]{0.235\textwidth}
         \centering
         \includegraphics[height=4.8cm]{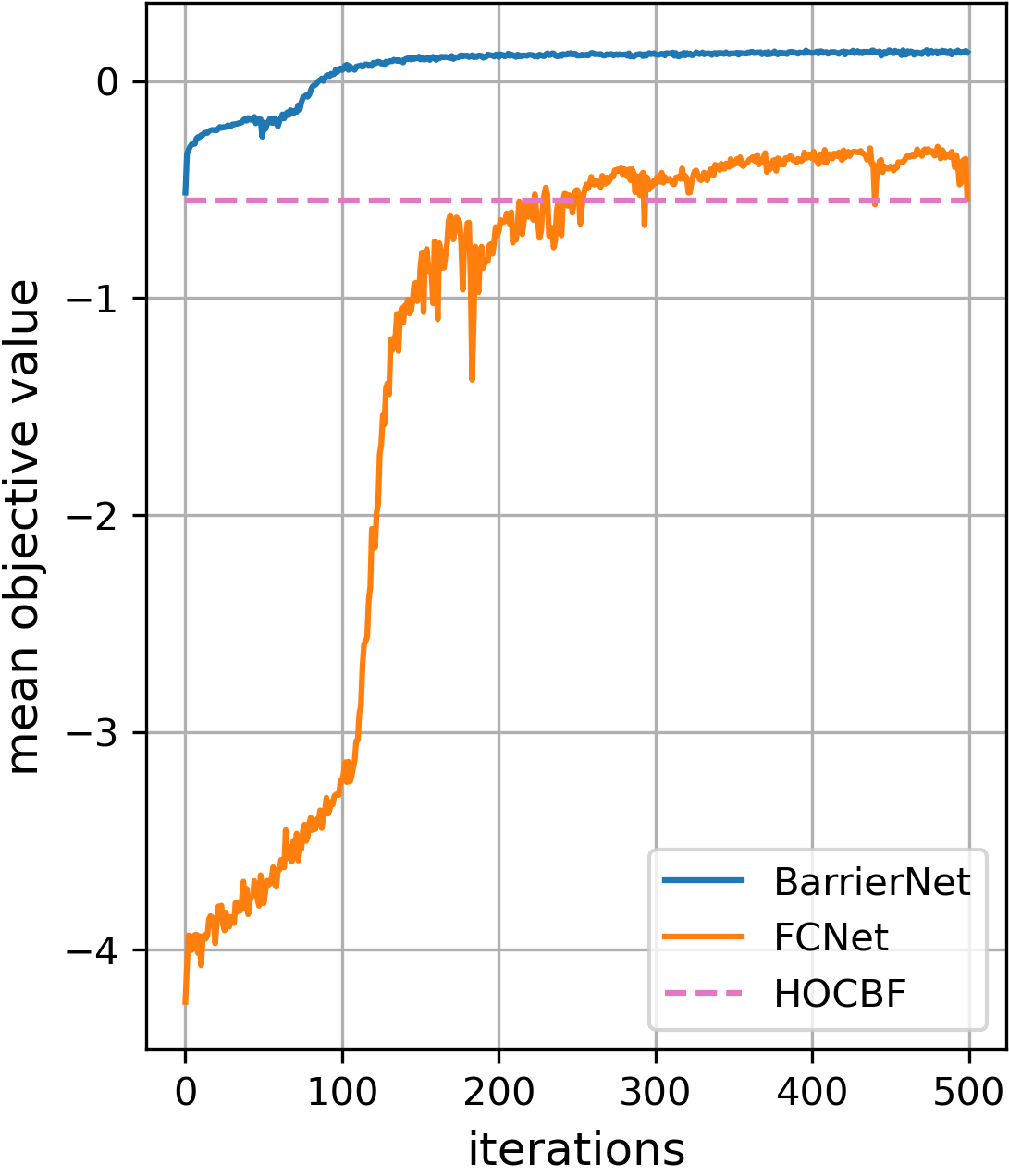}
         \caption{\small Objective}
         \label{fig:ob}
     \end{subfigure}
    \caption{Learning curves for the methods of BarrierNet and FCNet. Dashed lines show the result of directly using HOCBFs. (a) the mean robustness values during training. (b) the mean objective function values during training.}
\label{fig:curves}
\end{figure}

\section{Conclusion and Future Work}
\label{sec:conclusion}
In this paper, we proposed an approach to learn a neural network controller that is guaranteed to satisfy a given STL specification. We first provided a general procedure to construct a set of trainable HOCBFs from an STL formula. Then we applied BarrierNet to train these HOCBFs together with other parameters in the neural network to improve the robustness of the controller. Simulation results show that our approach converges within fewer iterations and achieves a higher robustness score than our previous approach without BarrierNet. Future work includes the extension to full STL specifications (rather than fragments) and predicates other than reaching and avoiding circular regions.






\bibliographystyle{IEEEtran}
\bibliography{references}

\end{document}